\begin{document}
%
% paper title
% can use linebreaks \\ within to get better formatting as desired
\title{A Modified Levenberg-Marquardt Method for the Bidirectional Relay Channel}
\author{Guido~Dartmann,
	Ehsan~Zandi,
        Gerd~Ascheid,~\IEEEmembership{Senior Member,~IEEE}
\thanks{\footnotesize 
Guido Dartmann and Gerd Ascheid are with the Institute for Communication Technologies and Embedded Systems, RWTH Aachen University, Templergraben 55, 52056 Aachen, Germany; 
tel.: +49 241 80 27871; fax: +49 241 80 22195; email: \{guido.dartmann, gerd.ascheid\}@ice.rwth-aachen.de. 

Ehsan Zandi was with the Institute for Communication Technologies and Embedded Systems and is now with the Institute for Theoretical 
Information Technology, RWTH Aachen University, Templergraben 55, 52056 Aachen, Germany; e-mail: ehsan.zandi@ti.rwth-aachen.de. }}

\maketitle
\newtheorem{proposition}{Proposition}
\begin{abstract}
This paper presents an optimization approach for a system consisting of multiple bidirectional links over a two-way amplify-and-forward relay. It is desired to improve the fairness of the system.
All user pairs exchange information over one relay station with multiple antennas.
Due to the joint transmission to all users, the users are subject to mutual interference. A mitigation of the interference can be achieved by max-min fair precoding optimization where the relay is subject to a sum power constraint. 
The resulting optimization problem is non-convex.  
This paper proposes a novel iterative and low complexity approach based on a modified Levenberg-Marquardt method to find near optimal solutions. 
The presented method finds solutions close to the standard convex-solver based relaxation approach.
\end{abstract}

\begin{IEEEkeywords}
Max-min beamforming, two-way relays, low complexity
\end{IEEEkeywords}

\section{Introduction}

\IEEEPARstart{T}{he} bidirectional relay channel is a well-known cooperative wireless communication scenario where $M$ pairs of users exchange information over an amplify-and-forward relay. 
The relay cannot jointly receive and transmit, hence, it can be seen as a half-duplex relay. In classical systems, the users compete with each other for the wireless resources. 
A cooperative system can increase the fairness and/or system throughput with a centralized coordination at the expense of required global channel knowledge of all cooperative links.
The entire transmission from the sources to the destinations via a relay consists of two phases. 
In the first phase the users transmit to the relay station. Then, the relay combines the signals to a new signal. In the second phase the relay forwards the combined and amplified signal to the users. 

\subsection{Related Work:}
The first works regarding cooperative communication via the relay channels consider so-called one-way relay channels where the transmission is possible only in one direction.
The work of~\cite{HavaryTSPROC08} presents an optimal solution for a transmission of one source node to a destination node over multiple one-way relays each equipped with a single antenna. 
This one-way half-duplex relay system has the disadvantage of a capacity loss due to the half-duplex transmission at the relay nodes: In the first phase the source node transmits the signal to the relay, 
then the relay forwards the signal to the destination. The uplink transmission needs further two phases.
The two-way relay channel can overcome this capacity loss. Such a system combines the uplink and downlink transmission in two hops. 
Several works \cite{VazeITW2009,HavaryTSPROC2010,JingTSPROC2012,WangTCOM2012,ShahbazPanahiTSPROC2012} investigated the cooperative communication over a bidirectional relay channel with two users. 
In this single link scenario, an optimal solution can be obtained \cite{HavaryTSPROC2010,ShahbazPanahiTSPROC2012}. 
The generalization of the single link scenario is the multiuser bidirectional relay channel where multiple users compete for the wireless resources 
\cite{ChenTWCCOM2009,BournakaGC2011,SchadCAMSAP2011,WangTWCCOM2012,ZhangWSA2012,TaoTSPROC2012}.  
The transmission can be achieved over multiple relays each equipped with a single antenna as in \cite{SchadCAMSAP2011,BournakaGC2011}, or over a single relay equipped with multiple antennas 
as in \cite{ChenTWCCOM2009,WangTWCCOM2012,ZhangWSA2012,TaoTSPROC2012}. 
In this multi-link scenario it is often desired to improve the fairness among the users by optimizing the precoding vectors \cite{SchadCAMSAP2011,BournakaGC2011,TaoTSPROC2012}. 
The resulting problem is called max-min signal-to-interference-plus-noise ratio (SINR) optimization and is non-convex, in general. Several algorithms are based on convex relaxations with convex solvers \cite{TaoTSPROC2012}.
Two-way relaying is also termed as analog network coding. The work \cite{Song2010} investigates a scenario with a single source and a destination and selects best relay from a set of 
multiple relays based on the minimum symbol error rate.
Also the work \cite{Zhang2009b} considers the scenario with a single source and destination. However, the authors in \cite{Zhang2009b} consider beamforming at a single RS with multiple antennas. 
In \cite{Zhang2009a}, the authors extend their work to a scenario with multiple users and BSs. In contrast to our paper, the authors in \cite{Zhang2009a} investigate the power minimization problem. 
Their approach is mainly based on convex solvers.

\subsection{Contribution:}
The power control problem at the users for fixed relay precoders corresponds to a unicast power control problem which can be solved efficiently \cite{TanTSPROC11}.
Therefore, we do not focus on the user power control problem in this paper.
On the other hand, the max-min SINR relay precoder optimization problem is non-convex. 
However, it can be straightforwardly relaxed to a quasi-convex problem and solved via a bisection over convex feasibility check problems. 
These convex solvers often have a bad worst-case complexity \cite{KARIPIDIS08}. 
Therefore, this paper proposes an iterative algorithm, without the requirement of a convex solvers, e.g., \cite{Zhang2009a}, based on the Levenberg-Marquardt (LM) method with line search. 
To the best of our knowledge, there exists no SINR balancing approach in the literature which is based on the Levenberg-Marquardt method.
The derived approach requires an estimation of the balanced SINR, therefore, this paper also presents a novel closed form solution for the upper bound of the balanced SINR. 
The convergence of the presented LM method is proved and numerical results show only a small performance loss compared to the convex solver based methods.

\section{Data Model and System Setup}
The most important notations of this paper are summarized in Table \ref{notation}.
 \begin{table} [tch]
\begin{center}
 \caption [Summary of all notations in the paper]{Summary of all notations in the paper.} \label{notation}
  \begin{tabular}{| c || c | }
   \hline
Symbol/Notation & Meaning  \\ \hline \hline
$\mathbb{R}$ & set of all real numbers \\  \hline
$\mathbb{R}^+$ & set of all non-negative real numbers \\  \hline
$\mathbb{C}$ & set of all complex numbers \\  \hline
$\mathbb{R}^{m\,\times\,n}$ & set of all real-valued matrices of size $m\times n$ \\  \hline 
$\mathbb{C}^{m\,\times\,n}$ & set of all complex-valued matrices of size $m\times n$ \\  \hline
$|\;.\;|$ & absolute value/magnitude \\ \hline
$||\;.\;||$ & Euclidean norm, Frobenius norm  \\ \hline
$\operatorname{Tr}\{\;.\;\}$ &  trace of a matrix    \\ \hline
$[\mathbf{A}]_{i,j}$ & element $i,j$ of matrix $\mathbf{A}$ \\  \hline
$[\mathbf{A}]_{i,:}$ & $i^{th}$ row of matrix $\mathbf{A}$ \\  \hline
$[\mathbf{A}]_{:,j}$ & $j^{th}$ column of matrix $\mathbf{A}$\\  \hline
$\mathbf{I}_m$ &  identity matrix of size $m\times m$ \\  \hline
$(\;.\;)^H$ &  Hermitian operator  \\  \hline
$(\;.\;)^T$ &  transpose operator \\ \hline
$\mathbb{E}\{\;.\;\}$ & expected value  \\  \hline
$\otimes$ &  Kronecker product  \\  \hline
$\mathcal{O}$ & big $O$ notation \\  \hline
$\succeq$ &  positive semi-definite\\  \hline
$\lambda_{max}(\;.\;)$  &  maximum eigenvalue of a matrix \\  \hline
$\lambda_{min}(\;.\;)$  &  minimum eigenvalue of a matrix\\  \hline
$\operatorname{vec}(\mathbf{A})$ &  vectorized version of a matrix  \\  \hline
$\mathcal{R}\{\mathbf{A}\}$ &  real part matrix  \\  \hline
$\mathcal{I}\{\mathbf{A}\}$ &  imaginary part matrix  \\  \hline
  \end{tabular}
\end{center}

\end{table}

  \begin{figure}[t]
  \begin{center}
      \includegraphics[scale=0.55]{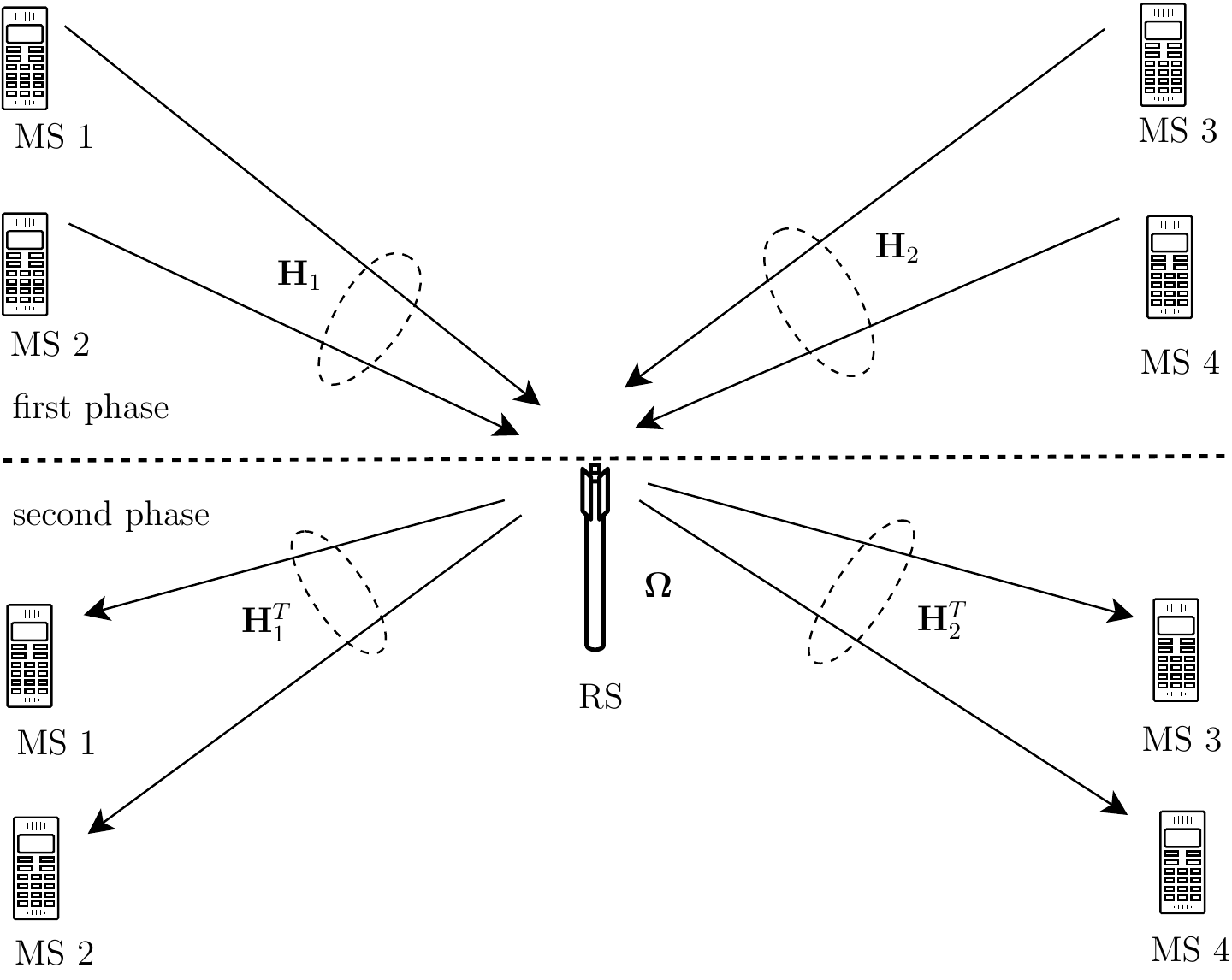}
      \caption{\small System setup of the considered network with a two-way RS.}
      \label{system_fig}
  \end{center}
 \end{figure}
\begin{figure*}[!t]

\normalsize
\begin{equation}\label{SINR1}
\gamma_i^t(\boldsymbol{\Omega})=\frac{|[\mathbf{A}_{t}\,\boldsymbol{\Omega}\,\mathbf{B}_{t}]_{i,i}|^2}{\sum_{\substack{j=1\\j\neq i}}^M |[\mathbf{A}_{t}\,\boldsymbol{\Omega}\,\mathbf{B}_{t}]_{i,j}|^2+
\sum_{\substack{j=1\\j\neq i}}^M |[\mathbf{A}_{t}\,\boldsymbol{\Omega}\,\mathbf{C}_{t}]_{i,j}|^2+\mathbb{E}\{||[\mathbf{A}_{t}\, \boldsymbol{\Omega}\mathbf{n}_R]_{i,:}||^2\}+\mathbb{E}\{|[\mathbf{n}_{t}]_i|^2\}}.
\end{equation}
\begin{equation}\label{SINR2}
\gamma_i^t(\boldsymbol{\Omega})=\frac{|[\mathbf{A}_{t}\,\boldsymbol{\Omega}\,\mathbf{B}_{t}]_{i,i}|^2}{\sum_{\substack{j=1\\j\neq i}}^M |[\mathbf{A}_{t}\,\boldsymbol{\Omega}\,\mathbf{B}_{t}]_{i,j}|^2+
\sum_{\substack{j=1\\j\neq i}}^M |[\mathbf{A}_{t}\,\boldsymbol{\Omega}\,\mathbf{C}_{t}]_{i,j}|^2+\sigma_R^2||[\mathbf{A}_{t}\, \boldsymbol{\Omega}]_{i,:}||^2+\sigma^2}.
\end{equation}
\hrulefill
\end{figure*}

This paper considers a system consisting of two sets of users $\mathcal{U}_1$ and  $\mathcal{U}_2$. Each set contains $M$ users where each user is equipped with a single antenna. 
The relay station (RS) is equipped with $N_R$ antennas. Each user of one set exchanges information with one user from the other set. 
Figure \ref{system_fig} depicts the setting of the considered system including all notations of channels and precoding matrices.  
The users of first set transmit the signal vector $\mathbf{x}_{1}\in \mathbb{C}^{M \times 1}$ and the users of the second set transmit the corresponding signal vector $\mathbf{x}_{2}\in \mathbb{C}^{M \times 1}$.
In the first phase, all $2M$ users transmit to the RS. The received signal at the RS is given by
\begin{equation}
 \mathbf{r}_R=\mathbf{H}_1   \mathbf{x}_{1}+\mathbf{H}_2\mathbf{x}_{2}+\mathbf{n}_R.
\end{equation}
Let $t,\bar{t} \in \{1,2\}$ and $\bar{t}\neq t$, in second phase, the relay station transmits the signal $\mathbf{s}_R=\boldsymbol{\Omega} \mathbf{r}_R$.
The users of the set with index $t$ receive the signal
\begin{align}
 \mathbf{r}_{t}&= \mathbf{H}_t^T \mathbf{s}_R+\mathbf{n}_{t}=    \mathbf{H}_t^T \boldsymbol{\Omega} [\mathbf{H}_t   \mathbf{x}_{t}+\mathbf{H}_{\bar{t}}\mathbf{x}_{\bar{t}}+\mathbf{n}_R]+\mathbf{n}_{t}.\nonumber \\
&= \mathbf{H}_t^T \boldsymbol{\Omega} \mathbf{H}_t   \mathbf{x}_{t}+   \mathbf{H}_t^T \boldsymbol{\Omega}\mathbf{H}_{\bar{t}}\mathbf{x}_{\bar{t}}+ \mathbf{H}_t^T \boldsymbol{\Omega}\mathbf{n}_R+ \mathbf{n}_{t}.
\end{align}
With the definitions $\mathbf{A}_{t}=  \mathbf{H}_t^T$, $\mathbf{B}_{t}=\mathbf{H}_{\bar{t}}$, and $\mathbf{C}_{t}=\mathbf{H}_t$, the received signal can be simplified to:
\begin{align}
 \mathbf{r}_{t} = \mathbf{A}_{t}\,\boldsymbol{\Omega}\,\mathbf{C}_{t}\,\mathbf{x}_{t}+ \mathbf{A}_{t}\,\boldsymbol{\Omega}\,\mathbf{B}_{t}\,\mathbf{x}_{\bar{t}}+\mathbf{A}_{t}\, \boldsymbol{\Omega}\,\mathbf{n}_R+\mathbf{n}_{t}.
\end{align}
A useful performance measure is the SINR given in Eq. \eqref{SINR1}.  
Notice the numerator of this fraction corresponds to the useful signal for the user of desire. The back-propagated self-interface, $[\mathbf{A}_{t}\,\boldsymbol{\Omega}\,\mathbf{C}_{t}]_{i,i}$, can be canceled out
assuming that complete channel information is available at each node \cite{TaoTSPROC2012}.
Assuming the noise vectors $\mathbf{n}_{t}$ and $\mathbf{n}_R$ are Gaussian, independent and identically distributed (iid) with zero mean and have the variance 
$\mathbb{E}\{\mathbf{n}_{t}\mathbf{n}_{t}^H\}=\sigma^2\mathbf{I}$ and $\mathbb{E}\{\mathbf{n}_R\mathbf{n}_R^H\}=\sigma_R^2\mathbf{I}$, 
the weighted noise term can be simplified to 
\begin{align}\label{line_noise}
 \mathbb{E}\{||[\mathbf{A}_t\, \boldsymbol{\Omega}\,\mathbf{n}_R]_i||^2\}&= \mathbb{E}\{||[\mathbf{A}_t\, \boldsymbol{\Omega}]_{i,:}\,\mathbf{n}_R||^2\} \\ \nonumber
&= \sigma_R^2\,[\mathbf{A}_t\, \boldsymbol{\Omega}]_{i,:}([\mathbf{A}_t\, \boldsymbol{\Omega}]_{i,:})^H \\ \nonumber
&= \sigma_R^2||[\mathbf{A}_t\, \boldsymbol{\Omega}]_{i,:}||^2. 
\end{align}
Furthermore, we can simplify $\mathbb{E}\{|[\mathbf{n}_{t}]_i|^2\}=\mathbb{E}\{[\mathbf{n}_{t}]_i ([\mathbf{n}_{t}]_i)^H\}=\sigma^2$, $\forall \; t\in \{1,2\}$  $\forall \; i\in \{1,\ldots, M\}$
Hence, we can rewrite \eqref{SINR1} to \eqref{SINR2}
\section{Optimization of the Relay Transmitter}

\subsection{Optimization Problem}
It is desired to improve the fairness among users. This approach can be expressed by the following optimization problem.
\begin{align} 
 \gamma^*=\max_{\boldsymbol{\Omega}}\;\;  \min_{\substack{i \in \{1,\ldots M\} \\ t\in \{1,2\}}} & \;\gamma^t_i(\boldsymbol{\Omega}) \label{MBPDL1_twoway}\\
\text{s.t.} &  \hspace{0.4cm} \operatorname{Tr}\{\boldsymbol{\Omega}\,\mathbf{Y}\,\boldsymbol{\Omega}^H\} \leq P \nonumber
\end{align}
where $\operatorname{Tr}\,\{\boldsymbol{\Omega}\mathbf{Y}\,\boldsymbol{\Omega}^H\}$ is maximum allowed transmit power at relay station
and $\mathbf{Y}\,=\mathbf{H}_1\,\mathbf{H}_1^H+\mathbf{H}_2\,\mathbf{H}_2^H+\sigma_R^2\,\mathbf{I}$. 
Problem \eqref{MBPDL1_twoway} is non-convex, due to the non-convex objective function. In what follows, we show that problem \eqref{MBPDL1_twoway} is a fractional program with quadratic numerators and denominators.
Similar to \cite{TaoTSPROC2012}, with $\boldsymbol{\omega}=\operatorname{vec}(\boldsymbol{\Omega})$ and $\mathbf{N}^t_i= \sigma_R^2\,\operatorname{diag}(\underbrace{[\mathbf{A}_t]_{i,:}^H[\mathbf{A}_t]_{i,:},\ldots,[\mathbf{A}_t]_{i,:}^H[\mathbf{A}_t]_{i,:}}_{\text{$N_R$ times}})$,
the noise term can be written as:
\begin{align}\label{noisePower}
  \sigma_R^2\,[\mathbf{A}_t\, \boldsymbol{\Omega}]_{i,:}([\mathbf{A}_t\, \boldsymbol{\Omega}]_{i,:})^H &=\sigma_R^2\,\sum_{k=1}^{N_R}[\boldsymbol{\Omega}]_{:,k}^H[\mathbf{A}_t]_{i,:}^H[\mathbf{A}_t]_{i,:}[\boldsymbol{\Omega}]_{:,k}\\ 
&=\boldsymbol{\omega}^H\mathbf{N}^t_i\boldsymbol{\omega}.\nonumber
\end{align}
The signal terms can be simplified as well. With $\mathbf{q}^t_{i,j}=[[\mathbf{A}_t]_{i,:}[\mathbf{B}_t]_{1,j},\ldots,[\mathbf{A}_t]_{i,:}[\mathbf{B}_t]_{N_R,j}]^H$ 
and $\mathbf{Q}^t_{i,j}=\mathbf{q}^t_{i,j}\mathbf{q}^{t^H}_{i,j}$, the interference is
\begin{align} \label{interferencePower}
[\mathbf{A}_t\,\boldsymbol{\Omega}\,\mathbf{B}_t]_{i,j} &=\sum_{k=1}^{N_R}[\mathbf{A}_t]_{i,:}[\mathbf{B}_t]_{k,j}[\boldsymbol{\Omega}]_{:,k}\\ \nonumber
\Rightarrow\; |[\mathbf{A}_t\,\boldsymbol{\Omega}\,\mathbf{B}_t]_{i,j}|^2&=\boldsymbol{\omega}^H\mathbf{Q}^t_{i,j}\boldsymbol{\omega}~~;~~\forall i,j\in \{1,\ldots,M\}.
\end{align}
Similarly, we can write 
\begin{align}\label{SameSideIUI}
|[\mathbf{A}_t\,\boldsymbol{\Omega}\,\mathbf{C}_t]_{i,j}|^2 &=\boldsymbol{\omega}^H\mathbf{S}^t_{i,j}\boldsymbol{\omega}~~;~~\forall i,j\in \{1,\ldots,M\}
\end{align}
where \mbox{$\mathbf{s}_{i,j}=[[\mathbf{A}_t]_{i,:}[\mathbf{C}_t]_{1,j},\ldots,[\mathbf{A}_t]_{i,:}[\mathbf{C}_t]_{N_R,j}]^H$} and  $\mathbf{S}^t_{i,j}=\mathbf{s}_{i,j}\mathbf{s}_{i,j}^H$.
The terms \eqref{noisePower}, \eqref{interferencePower} and \eqref{SameSideIUI} can be combined to
\begin{equation}\label{matrixP}
 \boldsymbol{\omega}^H\mathbf{P}^t_i\boldsymbol{\omega}=\boldsymbol{\omega}^H\Big(\mathbf{N}^t_i+\sum_{\substack{j=1\\j\neq i}}^{M}(\mathbf{Q}^t_{i,j}+\mathbf{S}^t_{i,j})\big)\boldsymbol{\omega}^H.
\end{equation}
It can also be shown that $\operatorname{Tr}\,\{\boldsymbol{\Omega}\,\mathbf{Y}\,\boldsymbol{\Omega}^H\}\,=\boldsymbol{\omega}^H\,(\mathbf{Y}^T\,\otimes\,\mathbf{I}_{N_R})\,\boldsymbol{\omega}$, in which $\otimes$ 
notifies the Kronecker product of two matrices. Therefore, using \eqref{interferencePower} and \eqref{matrixP}, the original max-min optimization \eqref{MBPDL1_twoway} 
problem is also given as the following fractional program
\begin{align} \label{MBPDL1_twoway_simple}
 \gamma^*= & \max_{\boldsymbol{\omega}}\;\;  \min_{\substack{i \in \{1,\ldots,M\}\\ t\in\{1,2\}}}  
\frac{\boldsymbol{\omega}^H\mathbf{Q}^t_{i,i}\boldsymbol{\omega}}{\boldsymbol{\omega}^H\mathbf{P}^t_i\boldsymbol{\omega}+\sigma^2} \\
\text{s.t.} &  \hspace{0.4cm} \boldsymbol{\omega}^H\,\mathbf{Z}\,\boldsymbol{\omega} \leq P \nonumber
\end{align}
where $\mathbf{Y}^T\,\otimes\,\mathbf{I}_{N_R}\,=\mathbf{Z}$. 
It is well know that this problem is generally non-convex and it can be approximated by semidefinite relaxation \cite{Palomar10}. 
\subsection{Approximation of the Non-Convex Fractional Program}
The fractional quadratic program \eqref{MBPDL1_twoway_simple} is non-convex and $\mathcal{NP}$-hard, in general \cite{Palomar10}. The state-of-the-art method to solve quadratically constrained fractional programs is 
a relaxation to a quasi-convex form based on a semi-definite program (SDP) \cite{Palomar10}. 
A bisection algorithm solves several convex feasibility check problems and converges arbitrarily closely to the global optimal value \cite{Boyd04}.
By dropping the non-convex rank-1 constraint, the feasibility check problem is given by a semi-definite program. A near optimal rank-1 solution can be recovered by a randomization method \cite{SIDIRO2006}.
Hence, the approximation of the optimal solution is based on semi-definite relaxation. This relaxation results in near optimal solutions, however, at the expense of high worst case complexity~\cite{KARIPIDIS08}.

In what follows a new approximation of the optimal solution is presented. The approximation is based on an estimation of the minimax upper bound of the optimal value. 
\newtheorem{lemma}{Lemma} 
\begin{lemma}\label{minimax_inequality} \cite{Boyd04}  Minimax inequality: Let $\mathcal{X}$ and $\mathcal{Y}$ be arbitrary sets and let $f()$ be an arbitrary function, then 
\begin{equation}
 \min_{\boldsymbol{y}\in \mathcal{Y}} \max_{\boldsymbol{x}\in \mathcal{X}}f(\boldsymbol{y},\boldsymbol{x}) \geq \max_{\boldsymbol{x}\in \mathcal{X}}\min_{\boldsymbol{y}\in \mathcal{Y}} f(\boldsymbol{y},\boldsymbol{x}).
\end{equation}

\end{lemma}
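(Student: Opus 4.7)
The plan is to derive the inequality directly from the defining property of $\min$ and $\max$, without invoking any convexity, continuity, or topology. The starting observation is a two-sided sandwich: for every fixed pair $(x_0, y_0) \in \mathcal{X} \times \mathcal{Y}$, the value $f(y_0, x_0)$ lies above the minimum over the first argument and below the maximum over the second, so
\begin{equation*}
\min_{y \in \mathcal{Y}} f(y, x_0) \;\leq\; f(y_0, x_0) \;\leq\; \max_{x \in \mathcal{X}} f(y_0, x).
\end{equation*}
Collapsing the middle term yields $\min_{y \in \mathcal{Y}} f(y, x_0) \leq \max_{x \in \mathcal{X}} f(y_0, x)$ for \emph{all} $x_0 \in \mathcal{X}$ and $y_0 \in \mathcal{Y}$.

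The next step exploits the fact that the two sides of this inequality depend on disjoint variables: the left side is a function of $x_0$ alone and the right side of $y_0$ alone. This decoupling is precisely what makes the nested extremization tractable. First I would take the maximum over $x_0 \in \mathcal{X}$ on the left; the right side is unaffected, producing $\max_{x_0 \in \mathcal{X}} \min_{y \in \mathcal{Y}} f(y, x_0) \leq \max_{x \in \mathcal{X}} f(y_0, x)$ for every $y_0 \in \mathcal{Y}$. Taking the minimum over $y_0 \in \mathcal{Y}$ on the right then leaves the left side fixed, and relabeling dummy variables yields the claim.

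The main obstacle is really only a bookkeeping one, namely keeping the order of the two extremizations consistent with which variable each side already depends on, so that the ``free'' variable on each side can be absorbed into the correct outer $\min$ or $\max$. A minor technical remark worth including is that writing $\min$ and $\max$ implicitly assumes the extrema are attained; the same argument goes through verbatim with $\inf$ and $\sup$, since it uses only the one-sided bounding property of the extrema, so the lemma as stated holds for truly arbitrary $\mathcal{X}$, $\mathcal{Y}$, and $f$.
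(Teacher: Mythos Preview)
Your argument is correct and is exactly the standard elementary proof of the weak minimax inequality. The paper itself does not supply a proof of this lemma at all; it merely states the result and attributes it to \cite{Boyd04}, so there is nothing to compare against beyond noting that your derivation is the canonical one and your remark about $\inf$/$\sup$ versus $\min$/$\max$ is well placed.
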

Let $\mathcal{J}=\{1,\ldots,N \}$, $N=2M$ be the index set of all SINRs, let $\mathbf{Q}_j$ and $\mathbf{P}_j$ be matrices indexed according to this new index set $\mathcal{J}$, and let $\mathcal{P}$ 
be the convex domain of $\boldsymbol{\omega}$ with $\mathcal{P}=\{\boldsymbol{\omega}\in \mathbb{C}^{N_R^2}\mid  \boldsymbol{\omega}^H\,\mathbf{Z}\,\boldsymbol{\omega} \leq P\}$, Problem \eqref{MBPDL1_twoway_simple} 
can be equivalently expressed by
\begin{align} \label{MBPDL1_twoway_simple2}
 \gamma^*= & \max_{\boldsymbol{\omega}\in \mathcal{P}}\;\;  \min_{j \in \mathcal{J}}  
\frac{\boldsymbol{\omega}^H\mathbf{Q}_{j}\boldsymbol{\omega}}{\boldsymbol{\omega}^H\mathbf{P}_j\boldsymbol{\omega}+\sigma^2}.
\end{align}
\begin{proposition} \label{Proposition_bound}
 Let $\lambda_{\text{max}}(\mathbf{A})$ be the largest eigenvalue of matrix $\mathbf{A}$ and let 
$\boldsymbol{F}_j=\mathbf{Z}^{-\frac{1}{2}}\mathbf{Q}_j\mathbf{Z}^{-\frac{1}{2}}$, $\bar{\mathbf{P}}_j=\mathbf{Z}^{-\frac{1}{2}}\mathbf{P}_j\mathbf{Z}^{-\frac{1}{2}}$, $\boldsymbol{G}_j=\bar{\mathbf{P}}_j+\frac{\sigma^2}{p}\mathbf{I}$\footnote{$p$ denotes the transmit power.},
the upper bound of \eqref{MBPDL1_twoway_simple2} is given by
\begin{align} \label{MBPDL1_twoway_bound}
 \bar{\gamma}= \min_{\substack{j \in \mathcal{J}}} \;\; &  \lambda_{\text{max}} (\boldsymbol{G}_j^{-1} \boldsymbol{F}_j )\geq \max_{\boldsymbol{\omega}\in \mathcal{P}}\;\;  \min_{j \in \mathcal{J}}  
\frac{\boldsymbol{\omega}^H\mathbf{Q}_{j}\boldsymbol{\omega}}{\boldsymbol{\omega}^H\mathbf{P}_j\boldsymbol{\omega}+\sigma^2}.
\end{align}
\end{proposition}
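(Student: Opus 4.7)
The plan is to combine Lemma \ref{minimax_inequality} with a standard generalized Rayleigh-quotient argument. Concretely, applying the minimax inequality to the inner/outer optimization in \eqref{MBPDL1_twoway_simple2} immediately gives
\begin{equation*}
\gamma^*=\max_{\boldsymbol{\omega}\in\mathcal{P}}\min_{j\in\mathcal{J}}\frac{\boldsymbol{\omega}^H\mathbf{Q}_j\boldsymbol{\omega}}{\boldsymbol{\omega}^H\mathbf{P}_j\boldsymbol{\omega}+\sigma^2}\leq\min_{j\in\mathcal{J}}\max_{\boldsymbol{\omega}\in\mathcal{P}}\frac{\boldsymbol{\omega}^H\mathbf{Q}_j\boldsymbol{\omega}}{\boldsymbol{\omega}^H\mathbf{P}_j\boldsymbol{\omega}+\sigma^2},
\end{equation*}
so the bulk of the work is to evaluate the inner maximization for each fixed $j\in\mathcal{J}$ in closed form and show that it equals $\lambda_{\text{max}}(\boldsymbol{G}_j^{-1}\boldsymbol{F}_j)$.

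Next, I would whiten the power constraint by substituting $\boldsymbol{\psi}=\mathbf{Z}^{1/2}\boldsymbol{\omega}$ (noting $\mathbf{Z}\succeq\mathbf{0}$ so $\mathbf{Z}^{1/2}$ exists and is invertible). Then $\boldsymbol{\omega}^H\mathbf{Z}\boldsymbol{\omega}=\|\boldsymbol{\psi}\|^2$ and the ratio for index $j$ becomes
\begin{equation*}
\frac{\boldsymbol{\psi}^H\boldsymbol{F}_j\boldsymbol{\psi}}{\boldsymbol{\psi}^H\bar{\mathbf{P}}_j\boldsymbol{\psi}+\sigma^2},
\end{equation*}
with $\mathcal{P}$ transforming to $\{\boldsymbol{\psi}:\|\boldsymbol{\psi}\|^2\leq P\}$. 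Because the numerator scales as $c^2$ and the denominator as $c^2\boldsymbol{\psi}^H\bar{\mathbf{P}}_j\boldsymbol{\psi}+\sigma^2$ under $\boldsymbol{\psi}\mapsto c\boldsymbol{\psi}$, the ratio is monotonically increasing in the scaling $c$, so the optimum saturates the power budget, $\|\boldsymbol{\psi}\|^2=p=P$. Substituting $\sigma^2=(\sigma^2/p)\|\boldsymbol{\psi}\|^2$ in the denominator absorbs the noise term into the quadratic form, yielding exactly $\boldsymbol{\psi}^H\boldsymbol{G}_j\boldsymbol{\psi}$ with $\boldsymbol{G}_j=\bar{\mathbf{P}}_j+(\sigma^2/p)\mathbf{I}$.

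The resulting expression is the generalized Rayleigh quotient $\boldsymbol{\psi}^H\boldsymbol{F}_j\boldsymbol{\psi}/\boldsymbol{\psi}^H\boldsymbol{G}_j\boldsymbol{\psi}$, which is scale-invariant so the constraint $\|\boldsymbol{\psi}\|^2=p$ is irrelevant and the maximum over $\boldsymbol{\psi}\neq\mathbf{0}$ is the standard eigenvalue characterization
\begin{equation*}
\max_{\boldsymbol{\psi}\neq\mathbf{0}}\frac{\boldsymbol{\psi}^H\boldsymbol{F}_j\boldsymbol{\psi}}{\boldsymbol{\psi}^H\boldsymbol{G}_j\boldsymbol{\psi}}=\lambda_{\text{max}}(\boldsymbol{G}_j^{-1}\boldsymbol{F}_j),
\end{equation*}
since $\boldsymbol{G}_j\succ\mathbf{0}$ (because $(\sigma^2/p)\mathbf{I}\succ\mathbf{0}$ and $\bar{\mathbf{P}}_j\succeq\mathbf{0}$). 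Taking the minimum over $j\in\mathcal{J}$ and chaining with the minimax inequality of the first paragraph produces the claimed bound $\bar{\gamma}$.

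The only subtle point I anticipate is the justification that the power constraint is active at the optimum; once that monotonicity in $c$ is established, the rest reduces to the classical Rayleigh-quotient identity and the invocation of Lemma \ref{minimax_inequality}. No tightness claim is required since the statement is only an upper bound, which sidesteps the generally strict gap in the minimax inequality.
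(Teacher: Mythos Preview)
Your proof is correct and follows essentially the same route as the paper: apply Lemma~\ref{minimax_inequality}, whiten via $\mathbf{Z}^{1/2}$, use monotonicity in the power scaling to saturate the constraint, and identify the resulting generalized Rayleigh quotient with $\lambda_{\text{max}}(\boldsymbol{G}_j^{-1}\boldsymbol{F}_j)$. The only minor slip is writing ``$\mathbf{Z}\succeq\mathbf{0}$ so $\mathbf{Z}^{1/2}$ is invertible''; you need $\mathbf{Z}\succ\mathbf{0}$, which holds here because $\mathbf{Y}=\mathbf{H}_1\mathbf{H}_1^H+\mathbf{H}_2\mathbf{H}_2^H+\sigma_R^2\mathbf{I}\succ\mathbf{0}$.
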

\begin{proof}
 The proof follows directly from Lemma \ref{minimax_inequality}:
\begin{align} \label{MBPDL1_twoway_bound2}
 \bar{\gamma}= \min_{j \in \mathcal{J}} \;\; \max_{\boldsymbol{\omega}\in \mathcal{P}}  
\frac{\boldsymbol{\omega}^H\mathbf{Q}_{j}\boldsymbol{\omega}}{\boldsymbol{\omega}^H\mathbf{P}_j\boldsymbol{\omega}+\sigma^2}\geq \max_{\boldsymbol{\omega}\in \mathcal{P}}\;\;  \min_{j \in \mathcal{J}}  
\frac{\boldsymbol{\omega}^H\mathbf{Q}_{j}\boldsymbol{\omega}}{\boldsymbol{\omega}^H\mathbf{P}_j\boldsymbol{\omega}+\sigma^2}.
\end{align}
Similar to the work of Havary-Nassab and et al. \cite{HavaryTSPROC08}, by introducing a new variable with unit norm, i.e. $\mathbf{w};~\sqrt{p}\,\mathbf{w}=\mathbf{Z}^\frac{1}{2}\boldsymbol{\omega}$ and the domain 
 $\mathcal{W}=\{ \mathbf{w}\in\mathbb{C}^{N_R^2}\mid ||\mathbf{w}||^2=1 \}$, \eqref{MBPDL1_twoway_bound2} can be recast into:
\begin{align} \label{gamma_bound_recast}
 \bar{\gamma}= \min_{\substack{j \in \mathcal{J}}} \;\; &  \max_{\mathbf{w} \in \mathcal{W}}
\frac{\mathbf{w}^H\boldsymbol{F}_j\mathbf{w}}{\mathbf{w}^H\boldsymbol{G}_j\mathbf{w}}.
\end{align}
It is argued in \cite{HavaryTSPROC08} that the objective function in \eqref{gamma_bound_recast} is non-decreasing w.r.t.  to $p$, thus, the maximum over $\mathbf{w}$ is attained at $p=P$.
The Matrices $\boldsymbol{G}_j$ are positive definite, therefore, the upper bound is expressed by special eigenvalue problem \eqref{MBPDL1_twoway_bound}.
\end{proof}
The upper bound of problem \eqref{MBPDL1_twoway_simple2} is a close bound. Regard Lemma \ref{minimax_inequality}, as argued in \cite{BarrosJGO1996}, in the case $\mathcal{Y}$ is a compact and convex set, 
$\mathcal{X}$ is a convex set and $f()$ is a real valued function, where $f(\boldsymbol{y},\cdot)$ is upper semi-continuous and quasi-concave on $\mathcal{X}$ 
for all $\boldsymbol{y}\in \mathcal{Y}$ and $f(\cdot,\boldsymbol{x})$ is lower semi-continuous 
and quasi-convex on $\mathcal{Y}$ for all $\boldsymbol{x}\in \mathcal{X}$, strong duality holds. 
Strong duality is not given for problem \eqref{MBPDL1_twoway_simple2} due to the non-convexity of the SINR function on $\mathcal{P}$ for all $\boldsymbol{y}\in\mathcal{Y}$.
Proposition \ref{Proposition_bound} provides a bound in the vicinity of the optimal value $\gamma^*$. 
The upper bound $\gamma^*$ is not reachable in general, however, it is possible to find an $\boldsymbol{\omega}^*$ which yields an SINR close to the optimal value $\gamma^*=\bar{\gamma}-\epsilon$ for some $\epsilon\geq 0$.

\section{Algorithm}
Proposition \ref{Proposition_bound} offers a direct solution for a close upper bound of the balanced SINR.
Compared to the work of Tao et al. \cite{TaoTSPROC2012}, this upper bound leads to an algorithm where the number of bisection iterations can be reduced.
Assuming the upper bound $\gamma$ is tight, $\gamma\approx \gamma^*$, or  $\gamma^*=\gamma-\epsilon$, the problem \eqref{MBPDL1_twoway_simple2} can be approximated by:
\begin{align} \label{MBPDL1_approx}
\text{find} & \;\;\mathbf{w}\in \mathcal{W} \\
\text{s.t.:} & \;\;\mathbf{w}^H(\boldsymbol{F}_j-\gamma\boldsymbol{G}_j)\mathbf{w}=0 \;\; \forall j\in \mathcal{J}.\nonumber
\end{align}
Let $\mathbf{D}_j(\gamma)=\boldsymbol{F}_j-\gamma\boldsymbol{G}_j$, Problem \eqref{MBPDL1_approx} is a nonlinear system of equations with 
$f_i(\mathbf{w})=\mathbf{w}^H\mathbf{D}_j(\gamma)\mathbf{w}$ and $\mathbf{f}(\mathbf{w})=[f_1(\mathbf{w}),\ldots,f_{2M}(\mathbf{w})]^T$.
Hence, we are interested in finding $\mathbf{f}(\mathbf{w})=\mathbf{0}$.
Using the complex real isomorphisms for a complex vector $\mathbf{y}\in\mathbb{C}^{n}$ and a real vector $\mathbf{x}\in\mathbb{R}^{2n}$
\begin{equation}
 \mathbf{x}=\hat{\mathbf{y}}=[\mathcal{R}\{\mathbf{y}\}^T,\mathcal{I}\{\mathbf{y}^T\}]^T
\end{equation}
and for a complex Matrix $\mathbf{Y}\in\mathbb{C}^{n\times n}$ and a real matrix $\mathbf{X}\in\mathbb{R}^{2n\times 2n}$
\begin{equation}
 \mathbf{X}=\hat{\mathbf{Y}}=
\begin{bmatrix}
\mathcal{R}\{\mathbf{Y}\} &-\mathcal{I}\{\mathbf{Y}\} \\
\mathcal{I}\{\mathbf{Y}\} & \mathcal{R}\{\mathbf{Y}\}                                                                                   
\end{bmatrix},
\end{equation}
Now, with $\hat{\mathbf{w}}\in \hat{\mathcal{W}}=\{ \hat{\mathbf{w}}\in\mathbb{R}^{2N_R^2}\mid ||\hat{\mathbf{w}}||^2=1 \}$and $\hat{\mathbf{D}}_j(\gamma)$ 
the function $f_i(\mathbf{w})=\mathbf{w}^H\mathbf{D}_j(\gamma)\mathbf{w}=\hat{\mathbf{w}}^T\hat{\mathbf{D}}_j(\gamma)\hat{\mathbf{w}}=f_i(\hat{\mathbf{w}})$ we have
$\mathbf{f}(\hat{\mathbf{w}})=\mathbf{0}$.
Multiple low complexity algorithms to find near optimal solutions of  to solve $\mathbf{f}(\hat{\mathbf{w}})=\mathbf{0}$ exist.
Several approaches are based on the Newton's method \cite{Kelley99}. Let  $\mathbf{B}_i(\gamma)=\hat{\mathbf{D}}_i(\gamma)+\hat{\mathbf{D}}_i^T(\gamma)$, the Jacobian matrix of $\mathbf{f}(\hat{\mathbf{w}})$ is
\begin{equation} \mathbf{\nabla}\mathbf{f}(\hat{\mathbf{w}})=
\begin{bmatrix}
( \mathbf{B}_1(\gamma)\hat{\mathbf{w}})^T\\
\hdots \\
(\mathbf{B}_{2M}(\gamma) \hat{\mathbf{w}})^T
\end{bmatrix}.
 \end{equation}
The Newton-like methods converge to a local optimal solution if the Lipschitz condition holds~\cite{Kelley99}.
\begin{lemma} \label{lipschitz}
 Let $K=\sqrt{\sum_{i=1}^{2M}\sum_{j=1}^{2N_R^2}|[\mathbf{B}^T_i(\gamma)]_{:,j}|^2}$, the function $\mathbf{f}(\hat{\mathbf{w}})$ is Lipschitz continuously differentiable.
\end{lemma}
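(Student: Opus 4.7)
The plan is to verify the definition of being Lipschitz continuously differentiable directly, namely to exhibit a constant $K$ such that $\|\nabla\mathbf{f}(\hat{\mathbf{w}}_1)-\nabla\mathbf{f}(\hat{\mathbf{w}}_2)\|_F \le K\,\|\hat{\mathbf{w}}_1-\hat{\mathbf{w}}_2\|$ for every pair of points $\hat{\mathbf{w}}_1,\hat{\mathbf{w}}_2$. The key structural observation that makes this easy is that each component $f_i(\hat{\mathbf{w}})=\hat{\mathbf{w}}^T\hat{\mathbf{D}}_i(\gamma)\hat{\mathbf{w}}$ is a real quadratic form, so its gradient is the linear map $\hat{\mathbf{w}}\mapsto\mathbf{B}_i(\gamma)\hat{\mathbf{w}}$ with $\mathbf{B}_i(\gamma)=\hat{\mathbf{D}}_i(\gamma)+\hat{\mathbf{D}}_i^T(\gamma)$. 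Consequently the difference of Jacobians at the two points is itself linear in $\hat{\mathbf{w}}_1-\hat{\mathbf{w}}_2$, which reduces the whole claim to a norm-of-linear-map estimate with an explicit constant.

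First, I would write the $i$-th row of $\nabla\mathbf{f}(\hat{\mathbf{w}}_1)-\nabla\mathbf{f}(\hat{\mathbf{w}}_2)$ as $(\mathbf{B}_i(\gamma)(\hat{\mathbf{w}}_1-\hat{\mathbf{w}}_2))^T$ and expand the squared Frobenius norm of the stacked matrix as $\sum_{i=1}^{2M}\|\mathbf{B}_i(\gamma)(\hat{\mathbf{w}}_1-\hat{\mathbf{w}}_2)\|^2$. Next, I would bound each summand by $\|\mathbf{B}_i(\gamma)\|_F^{\,2}\,\|\hat{\mathbf{w}}_1-\hat{\mathbf{w}}_2\|^2$ using the row-wise Cauchy--Schwarz inequality $\|\mathbf{B}_i(\gamma)\mathbf{v}\|\le\|\mathbf{B}_i(\gamma)\|_F\,\|\mathbf{v}\|$. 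Finally, I would rewrite the Frobenius norm of $\mathbf{B}_i(\gamma)$ in the form $\|\mathbf{B}_i(\gamma)\|_F^{\,2}=\sum_{j=1}^{2N_R^2}|[\mathbf{B}_i^T(\gamma)]_{:,j}|^2$, since the columns of $\mathbf{B}_i^T(\gamma)$ are the rows of $\mathbf{B}_i(\gamma)$. Summing over $i$ yields exactly $K^2$ from the lemma statement, and taking the square root closes the Lipschitz estimate.

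The main obstacle is purely notational: making sure the bookkeeping between rows and columns in $\mathbf{B}_i^T(\gamma)$ matches the expression written in the lemma, and confirming that the constant is independent of $\hat{\mathbf{w}}$. Everything else is a one-line consequence of the linearity of the gradient of a quadratic form and the standard sub-multiplicative inequality of the Frobenius norm. Because $\hat{\mathbf{D}}_i(\gamma)$, and hence $\mathbf{B}_i(\gamma)$, are fixed matrices determined by the channel data and the estimated balanced SINR $\gamma$, the constant $K$ is finite and the Lipschitz property holds globally on $\mathbb{R}^{2N_R^2}$, in particular on the unit sphere $\hat{\mathcal{W}}$, as needed for the convergence analysis of the subsequent Levenberg--Marquardt iterates.
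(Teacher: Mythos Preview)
Your proposal is correct and follows essentially the same route as the paper: both write the $(i,j)$-entry of $\nabla\mathbf{f}(\hat{\mathbf{w}}_1)-\nabla\mathbf{f}(\hat{\mathbf{w}}_2)$ as $(\hat{\mathbf{w}}_1-\hat{\mathbf{w}}_2)^T[\mathbf{B}_i^T(\gamma)]_{:,j}$, apply Cauchy--Schwarz, and sum to obtain $K^2\|\hat{\mathbf{w}}_1-\hat{\mathbf{w}}_2\|^2$. The only cosmetic difference is that you package the entrywise Cauchy--Schwarz bound as the row-wise inequality $\|\mathbf{B}_i(\gamma)\mathbf{v}\|\le\|\mathbf{B}_i(\gamma)\|_F\,\|\mathbf{v}\|$, whereas the paper writes out the double sum explicitly.
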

\begin{proof}
 The Lipschitz condition for the Jacobian matrix $\mathbf{\nabla}\mathbf{f}(\hat{\mathbf{w}})$ is
\begin{equation}\label{Lipschitz}
 || \mathbf{\nabla}\mathbf{f}(\hat{\mathbf{w}}_1)-\mathbf{\nabla}\mathbf{f}(\hat{\mathbf{w}}_2)||\leq K||\hat{\mathbf{w}}_1-\hat{\mathbf{w}}_2||.
\end{equation}
The left side of \eqref{Lipschitz} can be rephrased as
\begin{equation}\label{Lipschitz2}
 || \mathbf{\nabla}\mathbf{f}(\hat{\mathbf{w}}_1)-\mathbf{\nabla}\mathbf{f}(\hat{\mathbf{w}}_2)||^2=\sum_{i=1}^{2M}\sum_{j=1}^{2N_R^2} |(\hat{\mathbf{w}}_1-\hat{\mathbf{w}}_2)^T[\mathbf{B}^T_i(\gamma)]_{:,j}|^2.
\end{equation}
Using the Cauchy-Schwarz inequality $|\mathbf{x}^T \mathbf{y}|^2\leq \mathbf{x}^T \mathbf{x} \cdot \mathbf{y}^T \mathbf{y}$, Eq. \eqref{Lipschitz2} is upper bounded by
\begin{align}\label{Lipschitz3}
 \sum_{i=1}^{2M}\sum_{j=1}^{2N_R^2} (\hat{\mathbf{w}}_1-\hat{\mathbf{w}}_2)^T(\hat{\mathbf{w}}_1-\hat{\mathbf{w}}_2) \cdot [\mathbf{B}^T_i(\gamma)]_{:,j}^T [\mathbf{B}^T_i(\gamma)]_{:,j}\\ 
=K^2 ||\hat{\mathbf{w}}_1-\hat{\mathbf{w}}_2||^2. \nonumber
\end{align}
Hence, the Jacobian $\mathbf{\nabla}\mathbf{f}(\hat{\mathbf{w}})$ is Lipschitz continuous. Consequently, $\mathbf{f}(\hat{\mathbf{w}})$ is Lipschitz continuously differentiable. 
\end{proof}

The Levenberg-Marquardt Method (LM) algorithm is an improved Newton based method to solve $\mathbf{f}(\hat{\mathbf{w}})=\mathbf{0}$ in a least squares sense. 
It prevents the Newton step to become unidentified because of a singular Jacobian matrix. 
The LM update is given by:
\begin{equation} \label{deltak}
 \boldsymbol{\delta}_k=-(\mathbf{\nabla}\mathbf{f}(\hat{\mathbf{w}}_k)^T\mathbf{\nabla}\mathbf{f}(\hat{\mathbf{w}}_k)+\mu_k \mathbf{I})^{-1}\mathbf{\nabla}\mathbf{f}(\hat{\mathbf{w}}_k)^T\mathbf{f}(\hat{\mathbf{w}}_k).
\end{equation}
Yamashita et al. \cite{Yamashita2001} have proved that $\mu_k=||\mathbf{f}(\hat{\mathbf{w}}_k)||^2$ provides super-linear convergence. 
Recently, Fan et al. \cite{FanSpringer2005} have extended the work of \cite{Yamashita2001} and proved that the parameter $\mu_k=||\mathbf{f}(\hat{\mathbf{w}}_k)||$ 
can achieve super-linear convergence if $||\mathbf{f}(\hat{\mathbf{w}}_k)||^{\delta}$, with $\delta \in [1,2]$ provides a local error bound.
\newtheorem{definition}{Definition}
\begin{definition}
Let $\hat{\mathbf{w}}\in \hat{\mathcal{W}}$ and let $\hat{\mathbf{w}}^*\in \hat{\mathcal{W}}^*$ be an optimal solution where $\hat{\mathcal{W}}^*$ is the set of optimal solutions. 
Let $\hat{\mathcal{W}}\cap \hat{\mathcal{W}}^*\neq \emptyset$, then $||\mathbf{f}(\hat{\mathbf{w}})||$ provides 
a local error bound on $\hat{\mathcal{W}}$ for \mbox{$\mathbf{f}(\hat{\mathbf{w}})=\mathbf{0}$} if there exists constant $c>0$ such that
\begin{equation}
 ||\mathbf{f}(\hat{\mathbf{w}})||\geq c \cdot \operatorname{dist}( \hat{\mathbf{w}},\hat{\mathbf{w}}^*)\;\; \forall \hat{\mathbf{w}}\in \hat{\mathcal{W}},\;\forall \hat{\mathbf{w}}^*\in\hat{\mathcal{W}}^*.
\end{equation}
\end{definition}
  \begin{figure}[t]
  \begin{center}
      \includegraphics[scale=0.8]{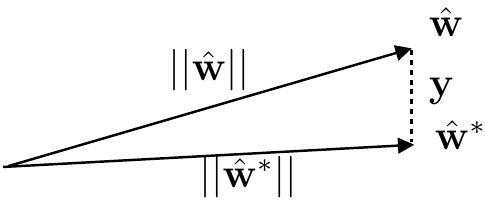}
      \caption{\small Distance between the solution $\hat{\mathbf{w}}$ and the optimal solution $\hat{\mathbf{w}}^*$.}
      \label{w_fig}
  \end{center}
 \end{figure}
A solution $\mathbf{f}(\hat{\mathbf{w}}^*)=\mathbf{0}$ can be achieved, e.g., by power control at the users.
In case a nonempty solution unequal $\hat{\mathbf{w}}=\mathbf{0}$ exists, we can proof that the algorithm converges superlinearly to the optimal balanced SINR when we are very close to the optimal solution.
It is hard to prove that the local error bound exists for every given value of $\gamma$. However, we can prove it for a $\gamma$ such that we have for at least one user $j^*$ that $\hat{\mathbf{D}}_{j^*}(\gamma)\prec 0$.
\begin{proposition} \label{localerrorbound} 
We assume having a tight upper bound of the SINR $\gamma=\bar{\gamma}$ such that we have least one user $j^*$ where \mbox{$\hat{\mathbf{D}}_{j^*}(\gamma)=\hat{\boldsymbol{F}}_{j^*}-\gamma\hat{\boldsymbol{G}}_{j^*}\prec 0$} and
the initial solution $\hat{\mathbf{w}}_0$, with $||\hat{\mathbf{w}}_0||=1$ of the Levenberg-Marquardt algorithm is sufficiently close to $\mathcal{W}^*$, $\operatorname{dist}( \hat{\mathbf{w}},\hat{\mathbf{w}}^*)< b$, with $b<1$ 
and $\mathbf{f}(\hat{\mathbf{w}}^*)=\mathbf{0}$ has a nonempty solution set.
Furthermore, let $\mu_k=||\mathbf{f}(\hat{\mathbf{w}}_k)||$, then sequence $\{\hat{\mathbf{w}}_{k+1}=\hat{\mathbf{w}}_k+\boldsymbol{\delta}_k\}$ converges superlinearly.
\end{proposition}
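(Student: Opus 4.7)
The plan is to reduce Proposition~\ref{localerrorbound} to the superlinear convergence theorem of Fan~\cite{FanSpringer2005} for the Levenberg-Marquardt iteration with parameter $\mu_k=\|\mathbf{f}(\hat{\mathbf{w}}_k)\|$. That theorem takes as input two hypotheses: (i) $\mathbf{f}$ is Lipschitz continuously differentiable on a neighborhood of the solution set $\hat{\mathcal{W}}^{*}$, and (ii) $\|\mathbf{f}\|$ furnishes a local error bound of the form $\|\mathbf{f}(\hat{\mathbf{w}})\|\ge c\cdot\operatorname{dist}(\hat{\mathbf{w}},\hat{\mathcal{W}}^{*})^{\delta}$ for some exponent $\delta\in[1,2]$. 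Once both are established, superlinear convergence follows directly, so the task reduces to verifying (i) and (ii).

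Hypothesis (i) is delivered by Lemma~\ref{lipschitz}, which establishes global Lipschitz continuity of $\nabla\mathbf{f}$ with constant $K$. The non-trivial step is hypothesis (ii), and the assumption $\hat{\mathbf{D}}_{j^{*}}(\gamma)\prec 0$ is the tool for it. Since $-\hat{\mathbf{D}}_{j^{*}}(\gamma)$ is positive definite, a direct Rayleigh quotient estimate gives
\begin{equation*}
|f_{j^{*}}(\hat{\mathbf{w}})| \;=\; |\hat{\mathbf{w}}^{T}\hat{\mathbf{D}}_{j^{*}}(\gamma)\hat{\mathbf{w}}| \;\ge\; c_{0}\,\|\hat{\mathbf{w}}\|^{2},\qquad c_{0}:=\lambda_{\min}\bigl(-\hat{\mathbf{D}}_{j^{*}}(\gamma)\bigr)>0.
\end{equation*}
Combining this with $\|\mathbf{f}(\hat{\mathbf{w}})\|\ge |f_{j^{*}}(\hat{\mathbf{w}})|$ and noticing that the single equation $f_{j^{*}}(\hat{\mathbf{w}})=0$ collapses the zero set of $\mathbf{f}$ locally onto the origin, so that $\operatorname{dist}(\hat{\mathbf{w}},\hat{\mathcal{W}}^{*})\le \|\hat{\mathbf{w}}\|$ in the ball of radius $b$ assumed by the proposition, yields
\begin{equation*}
\|\mathbf{f}(\hat{\mathbf{w}})\| \;\ge\; c_{0}\operatorname{dist}(\hat{\mathbf{w}},\hat{\mathcal{W}}^{*})^{2},
\end{equation*}
which is precisely a local error bound of exponent $\delta=2\in[1,2]$.

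With (i) and (ii) both in place, Fan's theorem applied to the LM update \eqref{deltak} yields superlinear convergence of $\{\hat{\mathbf{w}}_{k}\}$ toward $\hat{\mathcal{W}}^{*}$, the only extra ingredient being that the initial iterate lies within the radius $b$ of $\hat{\mathcal{W}}^{*}$, which is exactly the closeness assumption of the proposition. The step I expect to require the most care is the geometric bookkeeping inside the local error bound: exploiting $\hat{\mathbf{D}}_{j^{*}}(\gamma)\prec 0$ to pin down the structure of $\hat{\mathcal{W}}^{*}$ locally, and ensuring the constant $c_{0}$ stays bounded away from zero in the relevant regime of $\gamma$, so that the quadratic lower bound on $|f_{j^{*}}|$ cleanly translates into a lower bound on $\|\mathbf{f}\|$ in terms of $\operatorname{dist}(\cdot,\hat{\mathcal{W}}^{*})$ that the Fan--Yamashita machinery can consume without degeneracy.
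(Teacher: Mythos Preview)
Your proposal is correct and follows essentially the same route as the paper: both reduce to Fan's theorem \cite[Theorem~2.1]{FanSpringer2005} by invoking Lemma~\ref{lipschitz} for Lipschitz differentiability and then extracting a local error bound from the single negative-definite block $\hat{\mathbf{D}}_{j^*}(\gamma)\prec 0$ via the Rayleigh-quotient estimate $|f_{j^*}(\hat{\mathbf{w}})|\ge c_0\|\hat{\mathbf{w}}\|^2$. The only technical difference is in the final geometric step: the paper exploits the unit-norm constraint $\|\hat{\mathbf{w}}\|=1$ together with $\|\hat{\mathbf{w}}-\hat{\mathbf{w}}^*\|<b<1$ to obtain a \emph{linear} error bound $\|\mathbf{f}(\hat{\mathbf{w}})\|\ge c\,\|\hat{\mathbf{w}}-\hat{\mathbf{w}}^*\|$ (exponent $\delta=1$), whereas you observe that $\hat{\mathbf{D}}_{j^*}\prec 0$ forces the zero set to be the origin and deduce the \emph{quadratic} bound with $\delta=2$; since Fan's theorem accepts any $\delta\in[1,2]$, both variants feed the same conclusion.
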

\begin{proof}
First, we have to prove that $||\mathbf{f}(\hat{\mathbf{w}})||$ provides a local error bound. 
As shown in Fig.~\ref{w_fig}, we have 
 \begin{align} 
 &||\hat{\mathbf{w}}||=1 > b > ||\mathbf{y}|| =||\hat{\mathbf{w}}-\hat{\mathbf{w}}^*||. \label{w_inequality}
 \end{align}
The function $||\mathbf{f}(\hat{\mathbf{w}})||$, is lower bounded by:
\begin{align*} 
||\mathbf{f}(\hat{\mathbf{w}})||&=\sqrt{\sum_{\substack{j=1}}^{2M} |\hat{\mathbf{w}}^T\hat{\mathbf{D}}_j(\gamma)\hat{\mathbf{w}}|^2 }\geq \sqrt{ |\hat{\mathbf{w}}^T\hat{\mathbf{D}}_{j^*}(\gamma)\hat{\mathbf{w}}|^2} \geq 0 \nonumber
\end{align*}
where $j^*$ denotes the mentioned selected user index. We always have:
\begin{align*}
\lambda_{\text{min}}(\hat{\mathbf{D}}_{j^*}(\gamma))||\hat{\mathbf{w}}||^2 \leq \hat{\mathbf{w}}^T\hat{\mathbf{D}}_{j^*}(\gamma)\hat{\mathbf{w}} \leq \lambda_{\text{max}}(\hat{\mathbf{D}}_{j^*}(\gamma))||\hat{\mathbf{w}}||^2.
\end{align*}
Due to $\hat{\mathbf{D}}_{j^*}(\gamma)\prec 0 $, we have
\begin{align*}
||\hat{\mathbf{w}}||^2|\lambda_{\text{max}}(\hat{\mathbf{D}}_{j^*}(\gamma))| \leq |\hat{\mathbf{w}}^T\hat{\mathbf{D}}_{j^*}(\gamma)\hat{\mathbf{w}}| \leq||\hat{\mathbf{w}}||^2 |\lambda_{\text{min}}(\hat{\mathbf{D}}_{j^*}(\gamma))|.
\end{align*}
and we can use the inequality
\begin{equation*}
 ||\mathbf{f}(\hat{\mathbf{w}})||\geq |\hat{\mathbf{w}}^T \hat{\mathbf{D}}_{j^*}(\gamma)\hat{\mathbf{w}}| \geq ||\hat{\mathbf{w}}||^2\,|\lambda_{\text{max}}(\hat{\mathbf{D}}_{j^*}(\gamma))|=||\hat{\mathbf{w}}||^2\, c. 
\end{equation*}
Using the inequality:
\begin{equation*}
 1=||\hat{\mathbf{w}}||=||\hat{\mathbf{w}}||^2 >||\mathbf{y}||,
\end{equation*}
we have: $||\mathbf{f}(\hat{\mathbf{w}})||\geq ||\mathbf{y}||\,c.$
Using \eqref{w_inequality} we have
\begin{align*} 
&||\mathbf{f}(\hat{\mathbf{w}})|| \geq c\cdot ||\hat{\mathbf{w}}-\hat{\mathbf{w}}^*||.
\end{align*}
 According to Lemma \ref{lipschitz}, $||\mathbf{f}(\hat{\mathbf{w}})||$ is Lipschitz continuously differentiable, consequently the two assumptions of \cite[Theorem 2.1]{FanSpringer2005} hold
and  $\{\hat{\mathbf{w}}_{k+1}=\hat{\mathbf{w}}_k+\boldsymbol{\delta}_k\}$ converges superlinearly.
\end{proof}

Having a well chosen $\gamma$ and its corresponding eigenvector $\hat{\mathbf{w}}=\hat{\mathbf{w}}_0$ as initial solution and assuming $\hat{\mathbf{w}}_0$ 
is close to the set of solutions $\hat{\mathcal{W}}^*$ satisfying \eqref{MBPDL1_approx} leads to a fast convergence of the classical LM algorithm with $\mu_k=||\mathbf{f}(\hat{\mathbf{w}}_k)||$. 
Several simulation runs have shown a fast convergence if $\hat{\mathbf{w}}_0$ is chosen based on the upper bound \eqref{gamma_bound_recast} with a sufficiently large $\epsilon$.
However, in some cases, the LM method still requires a lot of iterations. A fast convergence to a local optimal solution is not guaranteed.

Therefore, this paper uses a modified LM algorithm based on a line search to find the optimal step size. Firstly, the unconstrained case ($\hat{\mathbf{w}}\in\mathbb{R}^{2N_R^2}$) is considered. 
Algorithm~\ref{MLM} presents the outline of the used modified LM method.
\begin{algorithm}
\caption{Modified Levenberg-Marquardt Method}
\label{MLM}
\begin{algorithmic}
\STATE \textbf{Initialize:} Find a $\hat{\mathbf{w}}_{0}$ based on the upper bound \eqref{gamma_bound_recast}, set $\epsilon>0$ sufficiently large. 
Set a $\nu \in (0,1)$ and the accuracy $\epsilon_{LM}$. Set $k=0$.
\WHILE{$||\mathbf{\nabla}\mathbf{f}(\hat{\mathbf{w}}_k)^H\mathbf{f}(\hat{\mathbf{w}}_k)||\geq \epsilon_{LM}$ and $k<N_{max}$}
\STATE Set $\mu_k=||\mathbf{f}(\hat{\mathbf{w}}_k)||$ and compute $\boldsymbol{\delta}_k$ by \eqref{deltak}
\IF{$||\mathbf{f}(\hat{\mathbf{w}}_k+\boldsymbol{\delta}_k)||\leq \nu ||\mathbf{f}(\hat{\mathbf{w}}_k)||$ }
\STATE $\hat{\mathbf{w}}_{k+1}=\hat{\mathbf{w}}_k+\boldsymbol{\delta}_k$
\ENDIF
\STATE Compute step size $\alpha_k$ by Armijo line search \cite{FanSpringer2005}.
\STATE $\hat{\mathbf{w}}_{k+1}=\hat{\mathbf{w}}_k+\alpha_k\boldsymbol{\delta}_k$ and $k=k+1$
\ENDWHILE
\STATE $\mathbf{w}_{k+1}\;\leftarrow \; \text{construct complex vector from } \hat{\mathbf{w}}_{k+1}$
\STATE $\mathbf{w}_{k+1}\;\leftarrow \;\sqrt{P}\mathbf{Z}^{-1/2}\mathbf{w}_{k+1}$
\RETURN $\mathbf{w}_{k+1}$
\end{algorithmic}
\end{algorithm}
\begin{proposition}
\cite[Theorem 3.1]{FanSpringer2005} Let the sequence $\{\hat{\mathbf{w}}_k\}$ be generated by Alg. \ref{MLM} with line search. 
Then any accumulation point of the sequence $\{\hat{\mathbf{w}}_k\}$ is a stationary point of $1/2||f(\hat{\mathbf{w}})||^2$. 
If an accumulation point of the sequence $\{\hat{\mathbf{w}}_k\}$ is a solution of \eqref{MBPDL1_approx}, then $\{\hat{\mathbf{w}}_k\}$ converges to the solution quadratically.
\end{proposition}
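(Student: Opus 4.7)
The plan is to reduce the proposition to a direct application of \cite[Theorem~3.1]{FanSpringer2005}, so the work amounts to verifying that the hypotheses of that theorem hold in our setting. Those hypotheses are: (i) the residual $\mathbf{f}$ is continuously differentiable with a Lipschitz continuous Jacobian; (ii) the LM regularization satisfies $\mu_k=\|\mathbf{f}(\hat{\mathbf{w}}_k)\|$; and (iii) globalization is performed by an Armijo line search along the LM direction $\boldsymbol{\delta}_k$ whenever the sufficient-decrease test on the full step fails.

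First I would check smoothness. Each component $f_j(\hat{\mathbf{w}})=\hat{\mathbf{w}}^T\hat{\mathbf{D}}_j(\gamma)\hat{\mathbf{w}}$ is a polynomial in $\hat{\mathbf{w}}$, hence $C^\infty$, and Lemma~\ref{lipschitz} has already established the Lipschitz continuity of $\nabla\mathbf{f}$ with constant $K$. Items (ii) and (iii) hold by direct inspection of Algorithm~\ref{MLM}. It remains to note that $\boldsymbol{\delta}_k$ defined in \eqref{deltak} is a descent direction for the merit function $\Phi(\hat{\mathbf{w}})=\tfrac{1}{2}\|\mathbf{f}(\hat{\mathbf{w}})\|^2$, since $\nabla\Phi(\hat{\mathbf{w}}_k)^T\boldsymbol{\delta}_k=-\mathbf{f}_k^T\nabla\mathbf{f}_k(\nabla\mathbf{f}_k^T\nabla\mathbf{f}_k+\mu_k\mathbf{I})^{-1}\nabla\mathbf{f}_k^T\mathbf{f}_k\leq 0$, with strict inequality whenever $\nabla\Phi(\hat{\mathbf{w}}_k)\neq\mathbf{0}$. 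Standard Armijo theory then yields the first claim: every accumulation point of $\{\hat{\mathbf{w}}_k\}$ is a stationary point of $\Phi$.

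For the local quadratic rate, the extra ingredient required by Fan's theorem on top of Lipschitz smoothness is a local error bound $\|\mathbf{f}(\hat{\mathbf{w}})\|\geq c\,\operatorname{dist}(\hat{\mathbf{w}},\hat{\mathcal{W}}^*)$ holding on some neighborhood of the accumulation point $\hat{\mathbf{w}}^\infty$ that solves \eqref{MBPDL1_approx}. Proposition~\ref{localerrorbound} supplies exactly such a bound, and I expect this transfer to be the main obstacle, since the proposition was phrased for the initialization. I would handle it by observing that its hypothesis $\hat{\mathbf{D}}_{j^*}(\gamma)\prec 0$ is a data-level property independent of the iterate, while its geometric assumption $\operatorname{dist}(\hat{\mathbf{w}},\hat{\mathbf{w}}^*)<b<1$ is automatically inherited by the tail of the sequence because $\hat{\mathbf{w}}^\infty$ is assumed to be a solution and the iterates cluster at it. Once the error bound is certified on such a tail neighborhood, Lemma~\ref{lipschitz} together with (i)--(iii) above fulfils the local hypotheses of \cite[Theorem~3.1]{FanSpringer2005}, yielding $\operatorname{dist}(\hat{\mathbf{w}}_{k+1},\hat{\mathcal{W}}^*)=O(\operatorname{dist}(\hat{\mathbf{w}}_k,\hat{\mathcal{W}}^*)^2)$ along that tail, which is the claimed quadratic convergence.
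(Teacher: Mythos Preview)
Your proposal is correct and follows essentially the same route as the paper: verify the hypotheses of \cite[Theorem~3.1]{FanSpringer2005} (Lipschitz continuous Jacobian via Lemma~\ref{lipschitz}, local error bound via Proposition~\ref{localerrorbound}, and the algorithmic choices $\mu_k=\|\mathbf{f}(\hat{\mathbf{w}}_k)\|$ with Armijo line search) and then invoke that theorem directly. The paper's proof is considerably terser---it simply states that the assumptions are already checked in the proof of Proposition~\ref{localerrorbound} and appeals to Fan's result---whereas you make the descent-direction property and the transfer of the error bound to a tail neighborhood of the accumulation point explicit, which is a legitimate refinement rather than a different argument.
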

\begin{proof}
 The proof is straightforward. As in the proof of Proposition \ref{localerrorbound}, the assumptions of \cite[Theorem 3.1]{FanSpringer2005} are already satisfied, 
if $\hat{\mathbf{w}}_0$ is sufficiently close to a solution satisfying \eqref{MBPDL1_approx}. 
In this case the algorithm converges  according to \cite[Theorem 3.1]{FanSpringer2005}.
\end{proof}
Algorithm~\ref{MLM} has the advantage of a fast convergence if the initial solution is close to the set of solutions satisfying \eqref{MBPDL1_approx}. 
In the other cases, Alg.~\ref{MLM} still converges to a least squares solution \cite{FanSpringer2005,Yamashita2001}.

\section{Numerical Results}
\label{simulation_comment}We optimized the precoding vectors with the presented algorithms and calculated the achievable rate \cite{TaoTSPROC2012,BournakaGC2011,SchadCAMSAP2011}.
To justify the efficiency of the proposed methods a huge number of simulations ($1000$) each with a different realization of channel coefficients are generated. 
In these simulations the number of users is chosen 
to be $2M=6$ while an RS  with $N_R=6$ antennas is assumed. The SNR in MAC phase is chosen to be constant, $SNR_{MAC}=10\log(\frac{P}{\sigma^2_R})=10$dB. 
Then we have varied peak power to noise ratio, $10\log(\frac{P}{\sigma^2})$ as in \cite{TaoTSPROC2012} and $P=10$. Also, the channel coefficients are assumed to be Rayleigh distributed. The channels are generated 
similar to \cite{Narasimhan2006}. First we have generated channel matrices, $\mathbf{H}_t=\tilde{\mathbf{H}}_t\mathbf{T}_t$, with entries in $\tilde{\mathbf{H}}_t$ which are i.i.d Gaussian random variables with zero means and 
unit variances and $\mathbf{T}_t=\sqrt{\frac{P}{M}}\mathbf{I}$. Then,
 we have made them correlated in order to preserve more practical relevance as follows:
$\tilde{\mathbf{H}}_1=\boldsymbol{\Theta}_{RS}^{1/2}\,\tilde{\mathbf{H}}_1\,\boldsymbol{\Theta}_1^{1/2}, \; \; \;\; \tilde{\mathbf{H}}_2=\boldsymbol{\Theta}_{RS}^{1/2}\,\tilde{\mathbf{H}}_2\,\boldsymbol{\Theta}_2^{1/2}$
where $[\boldsymbol{\Theta}_2]_{ij}=(\rho_{2})^{|i-j|}$, $[\boldsymbol{\Theta}_{RS}]_{ij}=(\rho_{RS})^{|i-j|}$ and $[\boldsymbol{\Theta}_1]_{ij}=(\rho_{1})^{|i-j|}$. The value
 $\rho_{RS}=0.5$ is chosen for RS antennas while users  are assumed to be less correlated than RS since they are spatially distributed within the cell, i.e. $\rho_{1}=\rho_{2}=0.1$. The numerical 
results are generated for the following methods:
\begin{itemize}
 \item Semidefinite relaxation based bisection algorithm as in \cite{TaoTSPROC2012} with  $\epsilon_{BS}=10^{-7}$ to calculate a tight a upper bound.
 \item LM method based bisection: Here the search for intial solutions for $\gamma$ is based on a bisection, to get into the vincinity ($\epsilon=0.1$) of the optimal SINR $\gamma^*$. 
If the minimum SINR of all users is larger in the next bisection step, the search will continue in the upper half interval else it continuous in the lower interval.
The LM method has the following parameter configuration: $\nu=0.9$, $\epsilon_{LM}=10^{-7}$, $N_{max}=50$, $\alpha_0=0.25$, and the upper bound of $\gamma$ is  scaled with an $\delta \in [0.6,\ldots,1]$ depending in the SNR 
to speed up the bisection search.
\end{itemize}
  \begin{figure}[t]
  \begin{center}
      \includegraphics[scale=0.35]{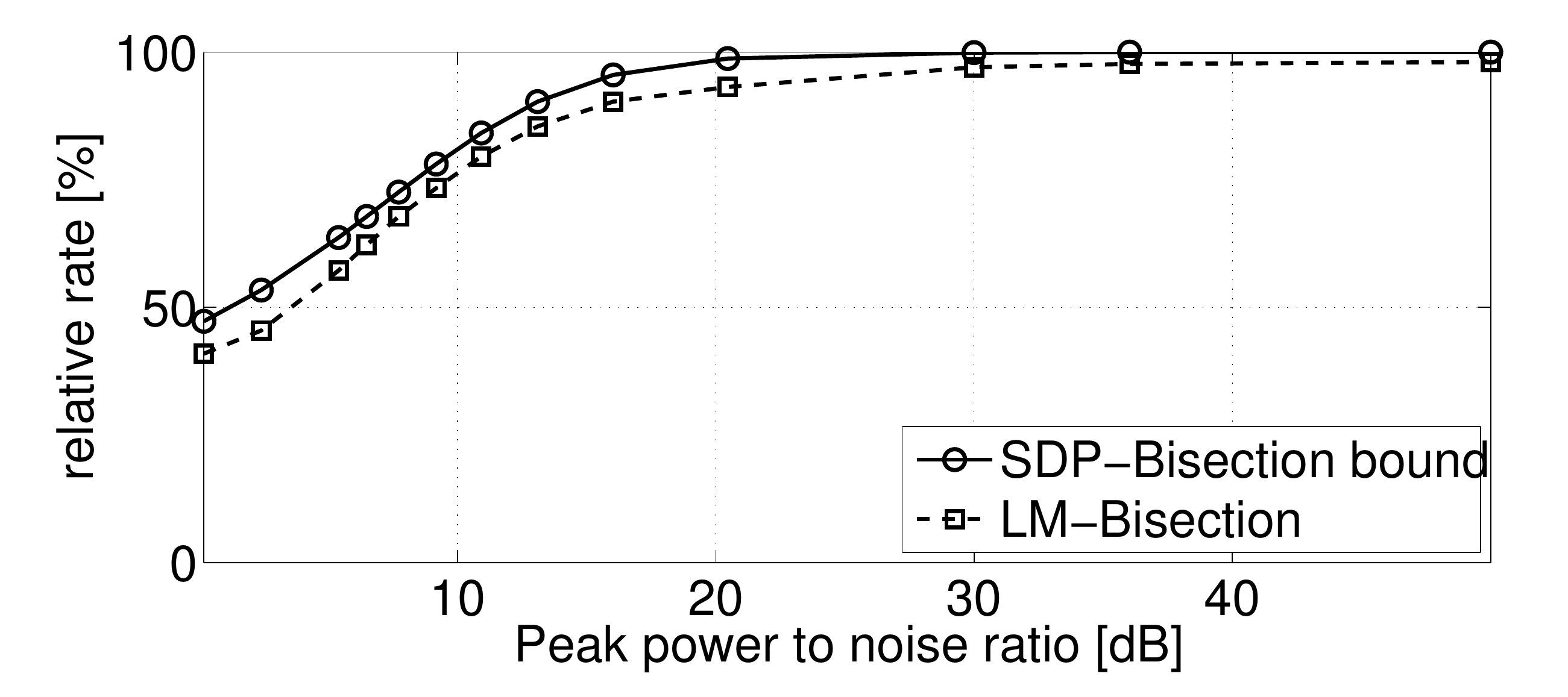}
      \caption{\small Numerical results for the minimum (achievable) rate of the investigated algorithms relative to the minimax upperbound in percent for different peak power to noise ratios.}
      \label{rate_fig}
  \end{center}
 \end{figure}

  \begin{figure}[t]
  \begin{center}
      \includegraphics[scale=0.35]{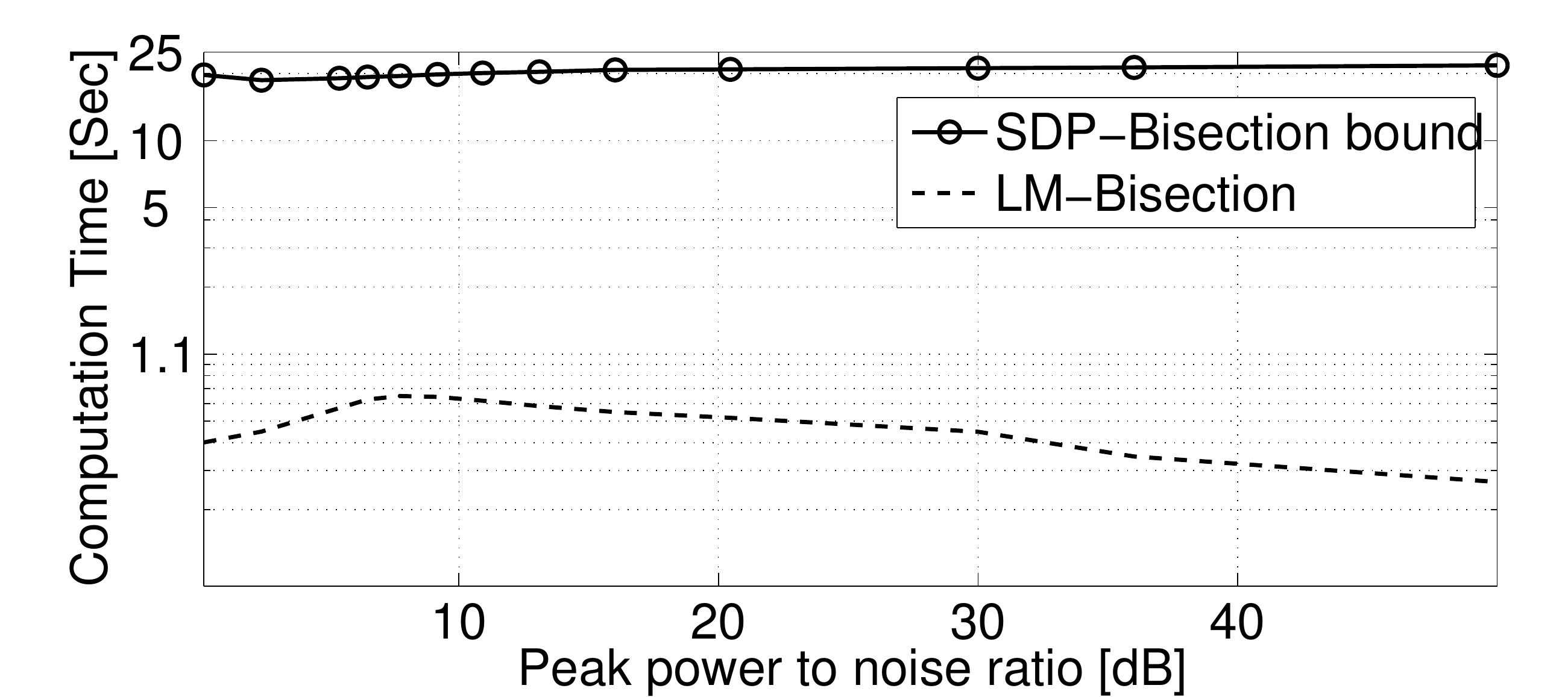}
      \caption{\small Numerical results for the computation time for different peak power to noise ratios.}
      \label{comp_fig}
  \end{center}
 \end{figure}

  \begin{figure}[t]
  \begin{center}
      \includegraphics[scale=0.35]{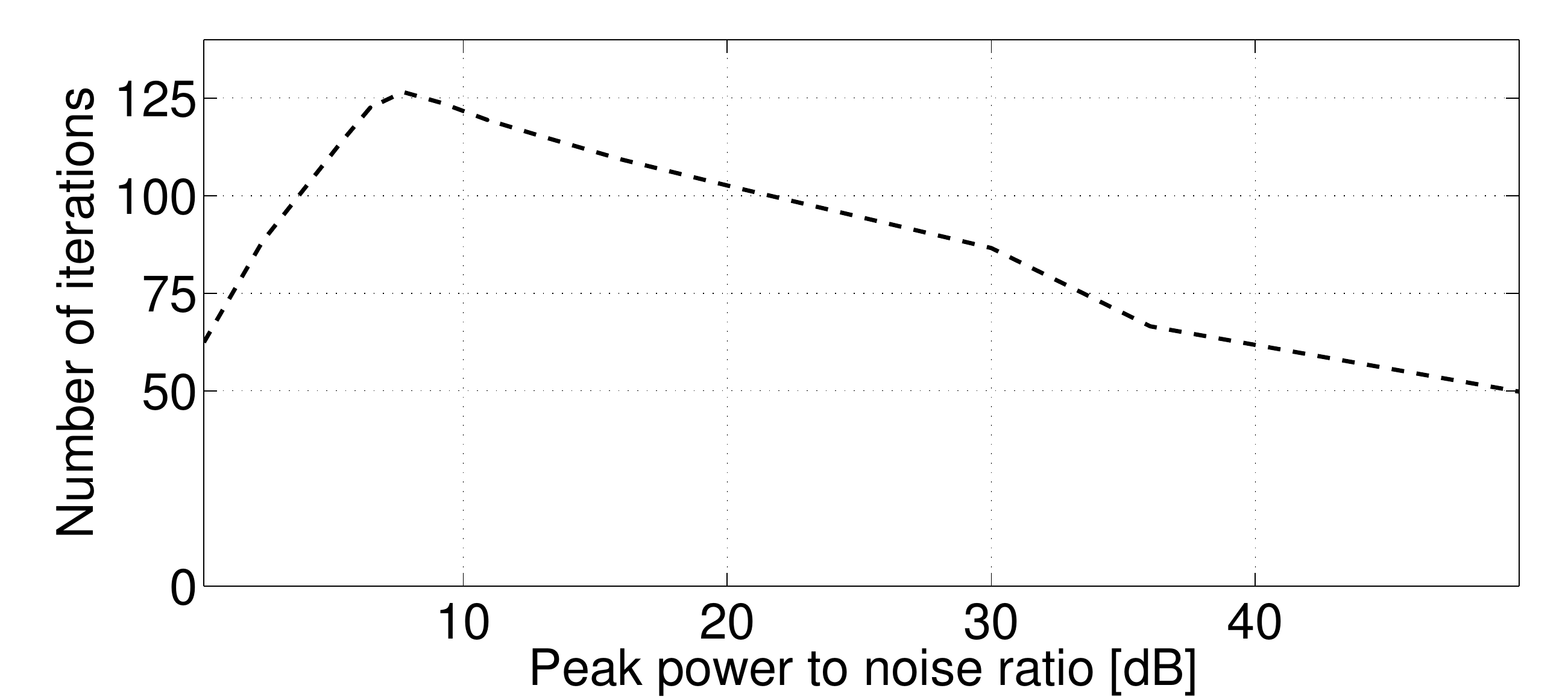}
      \caption{\small Numerical results for the mean total number of iterations for different peak power to noise ratios. Here the total number of iterations is shown. 
Also for the LM bisection algorithm all LM iterations in each bisection step are summed up. }
      \label{it_fig}
  \end{center}
 \end{figure}

  \begin{figure}[t]
  \begin{center}
      \includegraphics[scale=0.35]{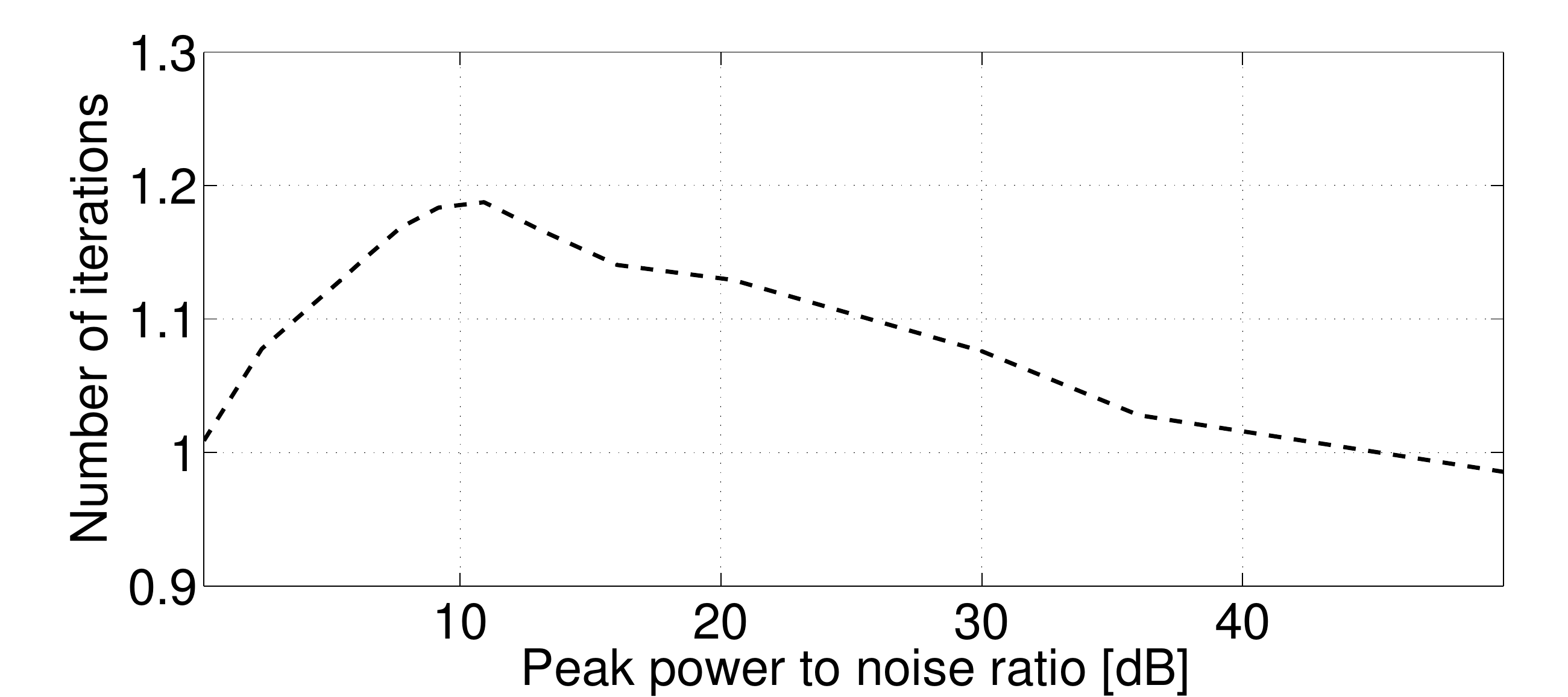}
      \caption{\small Numerical results for the mean number of line search iterations per LM step for different peak power to noise ratios.}
      \label{it_ls_fig}
  \end{center}
 \end{figure}

Figure \ref{rate_fig} shows the minimum (achievable) user rate  ($1/2 \log_{2}(1+ \gamma_i^t)$) of the different algorithms relative to the minimax upperbound in percent. 
As it can be observed, the LM method achieves rates close to the upper bound based on the SDP. 
The upper bound of Proposition \ref{Proposition_bound} is very tight in high SNR.
Figure \ref{it_fig} depicts the mean total number of iterations for the LM methods for different SNR values. Especially in low and very high SNR, the LM method converges fast.
Figure \ref{it_ls_fig} shows a fast convergence of the line search adaptation as well.

Let $n$ be the variable size, a sedumi-based \cite{Sedumi09} SDP solution for a multicast beamforming scenario with a worst case complexity of $\mathcal{O}(n^6)$ per iteration was proposed in \cite{SIDIRO2006}. 
A quadratic programming based SDP can be faster solved in $\mathcal{O}(n^{4.5})$ \cite{LuoMagazin10}.
The LM approach has complexity $\mathcal{O}(n^3)$ due to the matrix inversion. The computational complexity can be further reduced if a direct implementation in the complex domain is used. Such an implementation 
shows similar results.
Figure \ref{comp_fig}, shows the computation time of the two presented methods. It has to be emphasized that the SDP-based approach uses optimized code and the proposed LM-based approach uses not optimized code. 
However, the new proposed LM-method uses much less computation time than the conventional SDP-based technique.

\section{Conclusion}
This paper presents a novel approach for a low complexity algorithm for the non-convex max-min SINR optimization problem in the bidirectional relay channel. 
The algorithm is based on a novel closed form solution of the upper bound and a modified Levenberg-Marquardt algorithm. The convergence of the new method is proved. 
Numerical results indicate the performance of the proposed method. 
The achievable rate of the new algorithm is very close to the upper bound.

A fast convergence can be achieved if the initial solution $\hat{\mathbf{w}}_{0}$ is very close to the optimal solution. A future work can be an improved search for an initial solution or an adaptation of individual SINR
constraints such that $\mathbf{f}(\hat{\mathbf{w}}^*)=\mathbf{0}$ can be always achieved.

\bibliographystyle{IEEEtran}

\bibliography{TVT_Correspondence_TWC.bib}

\end{document}